\documentclass[reqno]{amsart}

\makeatletter
\@namedef{subjclassname@2020}{%
  \textup{2020} Mathematics Subject Classification}
\makeatother

\usepackage{amsmath}
\usepackage{amssymb}
\usepackage{enumerate}

\newtheorem{Theorem}{Theorem}[section]
\newtheorem{Lemma}[Theorem]{Lemma}

\newtheorem{Proposition}[Theorem]{Proposition}

\newcommand{\thref}[1]{Theorem \ref{#1}}
\newcommand{\leref}[1]{Lemma \ref{#1}}
\newcommand{\reref}[1]{Remark \ref{#1}}
\newcommand{\seref}[1]{Section \ref{#1}}
\newcommand{\prref}[1]{Proposition \ref{#1}}

\theoremstyle{definition}
\newtheorem{Definition}[Theorem]{Definition}
\newtheorem{Remark}[Theorem]{Remark}

\newtheorem*{Remark*}{Remark}

\numberwithin{equation}{section}

\begin{document}

\newcommand{\pd}{\partial}
\newcommand{\res}{\mathrm{res}}
\newcommand{\alg}{\mathrm{alg}}
\newcommand{\Gr}{\mathrm{Gr}}
\newcommand{\Grad}{\mathrm{Gr^{ad}}}
\newcommand{\Ai}{\mathrm{Ai}}
\newcommand{\Span}{\mathrm{span}}
\newcommand{\clspan}{\overline{\mathrm{span}}}
\newcommand{\Id}{\mathrm{Id}}
\newcommand{\diag}{\mathrm{diag}}
\newcommand{\Ad}{\mathrm{Ad}}

\newcommand{\Pset}{\mathbb{P}}
\newcommand{\Qset}{\mathbb{Q}}
\newcommand{\Rset}{\mathbb{R}}
\newcommand{\Cset}{\mathbb{C}}
\newcommand{\Nset}{\mathbb{N}}
\newcommand{\Zset}{\mathbb{Z}}

\newcommand{\kA}{\mathfrak{A}}
\newcommand{\fK}{\mathcal{K}}
\newcommand{\fR}{\mathfrak{R}}
\newcommand{\fa}{\mathfrak{a}}
\newcommand{\fb}{\mathfrak{b}}
\newcommand{\fp}{\mathfrak{p}}

\newcommand{\cC}{\mathcal{C}}
\newcommand{\cF}{\mathcal{F}}
\newcommand{\cH}{\mathcal{H}}
\newcommand{\cK}{\mathcal{K}}
\newcommand{\cL}{\mathcal{L}}
\newcommand{\cM}{\mathcal{M}}
\newcommand{\cP}{\mathcal{P}}
\newcommand{\cQ}{\mathcal{Q}}
\newcommand{\cR}{\mathcal{R}}
\newcommand{\cS}{\mathcal{S}}
\newcommand{\cU}{\mathcal{U}}
\newcommand{\cV}{\mathcal{V}}

\newcommand{\sH}{\mathsf{H}}

\newcommand{\al}{\alpha}
\newcommand{\be}{\beta}
\newcommand{\ga}{\gamma}
\newcommand{\ka}{\kappa}
\newcommand{\la}{\lambda}
\newcommand{\om}{\omega}

\newcommand{\Pt}{\tilde{P}}
\newcommand{\tS}{\tilde{S}}
\newcommand{\Vt}{\tilde{V}}
\newcommand{\pt}{\tilde{p}}
\newcommand{\xt}{\tilde{x}}
\newcommand{\et}{\tilde{e}}
\newcommand{\mt}{\tilde{m}}
\newcommand{\kt}{\tilde{k}}
\newcommand{\nt}{\tilde{n}}
\newcommand{\wt}{\tilde{w}}
\newcommand{\Gt}{\tilde{G}}
\newcommand{\Ht}{\tilde{H}}
\newcommand{\phit}{\tilde{\phi}}
\newcommand{\thetat}{\tilde{\theta}}

\newcommand{\Lt}{\tilde{L}}

\title{Gaudin model for the multinomial distribution}

\author[P.~Iliev]{Plamen~Iliev}
\address{School of Mathematics, Georgia Institute of Technology, 
Atlanta, GA 30332--0160, USA}
\email{iliev@math.gatech.edu}
\thanks{The author gratefully acknowledges support from the Simons Foundation through grant \#635462.} 

\date{June 9, 2023}

\subjclass[2020]{81R12, 17B81, 33C70}

\keywords{Gaudin algebras, quantum integrable systems, multinomial distribution, negative multinomial distribution, multivariate Krawtchouk and Meixner polynomials}

\begin{abstract} 
The goal of the paper is to analyze a Gaudin model for a polynomial representation of the Kohno-Drinfeld Lie algebra associated with the multinomial distribution. The main result is the construction of an explicit basis of the space of polynomials consisting of common eigenfunctions of Gaudin operators in terms of Aomoto-Gelfand hypergeometric series. The construction shows that the polynomials in this basis are also common eigenfunctions of the operators for a dual Gaudin model acting on the degree indices, and therefore they provide a solution to a multivariate discrete bispectral problem.
\end{abstract}

\maketitle

\section{Introduction} \label{se1}
It is hard to overstate the role that the classical orthogonal polynomials have played in mathematics and physics over the last few centuries. Recently, there have been interesting multivariate extensions of this theory based on connections with quantum integrable systems, representation theory and algebraic combinatorics, see for instance \cite{BLV,CKY,CFR,DIVV,DX,Mac,vD} and the references therein. In particular, the work \cite{KMP2} describing the irreducible representations of the symmetry algebra of the generic superintegrable system on the $3$-sphere suggested that spectral properties of multivariate extensions of the classical orthogonal polynomials were intimately related to first integrals of superintegrable systems. The extension of this work to arbitrary dimension \cite{I3,I4} linked these results to representations of the Kohno-Drinfeld Lie algebra associated with the Dirichlet distribution. These constructions were generalized in \cite{IX3} where it was shown that one can associate discrete quantum superintegrable systems to several classical distributions whose symmetries define polynomial representations of the Kohno-Drinfeld Lie algebra. An interesting corollary of these results is that the common eigenfunctions of maximal abelian subalgebras of the Kohno-Drinfeld Lie algebra define families of multivariate orthogonal polynomials which can be regarded as natural analogs of the classical orthogonal polynomials. In the present paper we construct an explicit basis of common eigenfunctions of the algebra generated by Gaudin operators for the representation associated with the multinomial distribution. We describe these notions and outline the main results below.

The Kohno-Drinfeld Lie algebra $\mathfrak{k}_{d+1}$, which appeared in \cite{Dr,Ko}, is the quotient of the free Lie algebra on generators $L_{i,j}=L_{j,i}$, $i\neq j\in\{0,1,\dots,d\}$ by the ideal generated by the relations
\begin{subequations}\label{1.1}
\begin{align}
[L_{i,j},L_{k,l}]&=0, &&\text{ if }i,j,k,l \text{ are distinct,} \label{1.1a}\\
[L_{i,j},L_{i,k}+L_{j,k}]&=0, &&\text{ if }i,j,k \text{ are distinct.}   \label{1.1b}
\end{align}
\end{subequations}
The element
\begin{equation}\label{1.2}
\cH=\sum_{0\leq i< j\leq d} L_{i,j}
\end{equation}
is central and belongs to all maximal abelian subalgebras of $\mathfrak{k}_{d+1}$. If we think of $\cH$ as a quantum Hamiltonian, then the operators $L_{i,j}$ represent symmetries, or integrals of motion for $\cH$. For representations of the  Kohno-Drinfeld Lie algebra associated with the hypergeometric distribution and the multinomial distribution, the operator $\cH$ can be regarded as a discrete quantum superintegrable system which extends the generic quantum superintegrable system on the $d$-sphere and the quantum harmonic oscillator, respectively \cite{IX3}. Within the theory of multivariate orthogonal polynomials, the operator $\cH$ appears naturally in the characterization of the second-order partial difference operators that have discrete orthogonal polynomials as eigenfunctions \cite{IX1}. 

Two maximal abelian subalgebras of the  Kohno-Drinfeld Lie algebra have been extensively studied in the literature:
\begin{enumerate}[(i)]
\item \label{ii1}
The abelian algebras generated by the {\em Jucys-Murphy elements}
$$L_{0,1}, L_{0,2}+L_{1,2}, L_{0,3}+L_{1,3}+L_{2,3},\dots, \sum_{j=0}^{d-1}L_{j,d},$$
which play an important role in the representation theory of the symmetric group \cite{OV}.
\item \label{ii2}
The abelian algebras generated by the {\em Gaudin elements}
\begin{equation}\label{1.3}
G_i(\al)=\sum_{\begin{subarray}{c}j=0\\ j\neq i \end{subarray}}^{d}\frac{L_{i,j}}{\al_{i}-\al_{j}},
\end{equation}
where  $\al_0,\dots,\al_{d}$ are fixed distinct numbers, which were considered in the work of Gaudin \cite{Gau1} for specific representations of  $\mathfrak{k}_{d+1}$. 
\end{enumerate}

For representations of the Kohno-Drinfeld Lie algebra associated with the multinomial, Dirichlet and Hahn distributions, the common eigenfunctions of the Jucys-Murphy elements in (\ref{ii1}) can be written as products of hypergeometric functions. Explicit formulas for these families of orthogonal polynomials and their bispectral properties can be found in \cite{GI}. The main result in this work is the construction of a basis of common eigenfunctions of Gaudin operators  (\ref{ii2}) in terms of Aomoto-Gelfand hypergeometric series for the representation of $\mathfrak{k}_{d+1}$ associated with the multinomial distribution. The Bethe ansatz equations in \cite{Gau2} are replaced here by a simpler decoupled system of equations for the parameters of a dual Gaudin model which is diagonalized by the same polynomials considered as functions of their degree indices. This establishes the bispectrality of the multinomial Gaudin model in the sense of Duistermaat and Gr\"unbaum \cite{DG}. Throughout the paper we focus on the multinomial distribution, but the constructions can be easily extended to the negative multinomial distribution, see \reref{re2.2} for details. 

The paper is organized as follows. In the next section we define the representation of the Kohno-Drinfeld Lie algebra for the multinomial distribution. We also review some of the results in \cite{I1}, and in particular, the set $\fK_{d}$ whose points parametrize Krawtchouk polynomials in $d$ variables and their bispectral properties which play a crucial in the proof of the main result. In \seref{se3} we describe several algebraic properties of the  operators $L_{i,j}$ associated with the multinomial distribution. We also derive necessary and sufficient conditions for the multivariate Krawtchouk polynomials to be common eigenfunctions of the Gaudin operators. 
This leads to a complicated overdetermined system of nonlinear algebraic equations for the free parameters defining the point $\ka\in \fK_{d}$. In \seref{se4}, we state and prove the main result of the paper by constructing a solution of the nonlinear equations using an appropriate ansatz.  

\section{Multivariate Krawtchouk polynomials}\label{se2}

\subsection{Representations of the  Kohno-Drinfeld algebra for the multinomial distribution}
Suppose that $p_0,p_1,\dots,p_d$ are positive real numbers such that 
\begin{equation}\label{2.1}
p_0+p_1+\cdots+p_d=1.
\end{equation}
The probability mass function of the multinomial distribution with parameters $p=(p_0,p_1,\dots,p_d)$ and $N\in\Nset$ is
$$\binom{N}{x_0,x_1,\dots,x_d} p_0^{x_0}p_1^{x_1}\cdots p_{d}^{x_{d}} =\frac{N!}{x_0!x_1!\cdots x_{d}!}\, p_0^{x_0}p_1^{x_1}\cdots p_{d}^{x_{d}},$$
where $x_i\in\Nset_0 $ and $x_0+\cdots+x_{d}=N$.
We set $x_{0}=N-(x_1+\cdots+x_{d})$, and throughout the paper we work with the independent variables $x=(x_1,\dots,x_d)$. This leads to the weight
$$W_{p,N}(x)=\binom{N}{N-|x|,x_1,\dots,x_d} \, p_{0}^{N-|x|} p_1^{x_1}\cdots p_{d}^{x_{d}} , $$
where $x=(x_1,\dots,x_d)\in V_N^d= \{x \in \Nset_0^d: |x|=x_1+\cdots+x_{d} \le N\}$ and the corresponding inner product
\begin{equation}\label{2.2}
\langle f,g\rangle_{p,N}=\sum_{x\in V_N^d}f(x)g(x)W_{p,N}(x).
\end{equation}
Let $\{e_1,e_2,\dots,e_d\}$ be the standard basis for $\Rset^d$, and let $E_{x_i}$ and $E_{x_i}^{-1}$ denote the shift operators
\begin{align*}
E_{x_i}f(x)=f(x+e_i) \quad \text{ and }\quad E_{x_i}^{-1}f(x)=f(x-e_i).
\end{align*}
Following \cite{IX3}, for $i\neq j\in\{0,\dots,d\}$ we define
\begin{equation}\label{2.3}
L_{i,j} =  p_i x_j (E_{x_i}E_{x_j}^{-1}-\Id)+ p_j x_i (E_{x_j}E_{x_i}^{-1}-\Id),
\end{equation}
with the convention that $E_{x_{0}}=\Id$ and $x_{0}=N-(x_1+\cdots+x_{d})$. One can show that these operators satisfy equations \eqref{1.1}, i.e.  they define a representation of the Kohno-Drinfeld Lie algebra. Moreover,  the operators $L_{i,j}$ have the following properties:
\begin{enumerate}
\item They are self-adjoint with respect to the inner product \eqref{2.2}.
\item For every $k\in\Nset_0$, they preserve the space 
$$\Rset_k[x]=\Span\{x_1^{m_1}\cdots x_d^{m_d}:m_1+\cdots+m_d\leq k\}$$ 
of polynomials of total degree at most $k$, i.e. $L_{i,j}:\Rset_{k}[x]\to\Rset_{k}[x]$.
\end{enumerate}
Thus, for every $k\in\Nset_0$, $k\leq N$, the operators in \eqref{2.3} define a representation of $\mathfrak{k}_{d+1}$ on the space 
$$\cP_k(p;N)=\Rset_{k}[x]\ominus\Rset_{k-1}[x]$$ 
of polynomials of degree $k$ which are orthogonal to all polynomials of degree at most $k-1$ with respect to the inner product \eqref{2.2} induced by the multinomial distribution.

\subsection{Parametrization and spectral properties of multivariate Krawtchouk polynomials}
Orthogonal bases of the space $\cP_k(p;N)$ were introduced by Griffiths \cite{Gr1} using generating functions. Mizukawa and Tanaka \cite{MT} gave an explicit formula for these polynomials in terms of the Aomoto-Gelfand hypergeometric series \cite{AK,Gel}. An interpretation of the polynomials within the context of the Lie algebra $\mathfrak{sl}_{d+1}$ was given in \cite{I1}. This approach provided yet another proof of the orthogonality and established their bispectral properties by showing that the polynomials are eigenfunctions of commuting partial difference operators parametrized by Cartan subalgebras of  $\mathfrak{sl}_{d+1}$. Connections to the Racah algebra, numerous probabilistic applications, Lax pairs and the analysis of the reproducing kernel of the multinomial distribution can be found in the recent works \cite{CVV,DiGr,GK,Xu}.
We review below the construction of these polynomials together with the commutative algebras of partial difference operators diagonalized by them which will be needed later, following the approach in \cite{I1}.

\begin{Definition}\label{de2.1} 
Let  $\fK_{d}$ denote the set of 4-tuples $(\nu,P,\Pt,U)$, where $\nu$ is a nonzero real number 
and $P,\Pt, U$ are $(d+1)\times (d+1)$ matrices with real entries satisfying the following conditions:
\begin{itemize}
\item[{(i)}] $P=\diag(p_0,p_1,\dots,p_{d})$ and 
$\Pt=\diag(\pt_0,\pt_1,\dots,\pt_{d})$ are diagonal, and
$p_0=\pt_0=\frac{1}{\nu}$;
\item[{(ii)}] $U=(u_{i,j})_{0\leq i,j\leq d}$ is such that 
$u_{0,j}=u_{j,0}=1$ for all $j=0,1,\dots,d$, i.e. 
\begin{equation}\label{2.4}
U=\left(\begin{matrix}1 & 1 & 1 & \dots &1\\ 
1 & u_{1,1} &u_{1,2} &\dots & u_{1,d}\\
 \vdots & \\
 1 & u_{d,1} & u_{d,2}  &\dots &u_{d,d}
\end{matrix}\right);
\end{equation}
\item[{(iii)}]  The following matrix equation holds
\begin{equation}\label{2.5}
\nu PU\Pt U^{t}=I_{d+1}. 
\end{equation}
\end{itemize}
\end{Definition}

From the definition it is easy to see that $p_j$ and $\pt_j$ are nonzero numbers, such that 
$$\sum_{j=0}^{d}p_j=\sum_{j=0}^{d}\pt_j=1.$$

For every point $\kappa\in \fK_{d}$ and every positive integer $N$ we define $d$-variable Krawtchouk polynomials $P_{n}(x;\kappa,N)$ with indices $n=(n_{1},\dots,n_{d})\in V_N^d$ in the variables $x=(x_{1},\dots,x_{d})\in  V_N^d$ in terms of the Aomoto-Gelfand hypergeometric series of type $(d+1,2d+2)$
\begin{equation}\label{2.6}
P_{n}(x;\kappa,N)=\sum_{A=(a_{i,j})\in \mathcal{M}_{d,N}}\frac{\prod_{j=1}^{d}(-n_j)_{\sum_{i=1}^{d}a_{i,j}}\, \prod_{i=1}^{d}(-x_i)_{\sum_{j=1}^{d}a_{i,j}}}{(-N)_{\sum_{i,j=1}^{d}a_{i,j}}}
\prod_{i,j=1}^{d}\frac{(1-u_{i,j})^{a_{i,j}}}{a_{i,j}!}.
\end{equation}
In the above formula $(b)_j$ is the Pochhammer symbol
$$(b)_0=1,\qquad \qquad (b)_j = b(b+1)\ldots(b+j-1)\quad\text{ for }j\in\Nset,$$ 
and $\mathcal{M}_{d,N}$ denotes the set of all $d\times d$ matrices $A=(a_{i,j})$ with nonnegative integer entries such that $\sum_{i,j=1}^{d}a_{i,j}\leq N$. 

The polynomials in \eqref{2.6} are mutually orthogonal with respect to the multinomial distribution:
\begin{equation}\label{2.7}
\langle P_{n}(x;\kappa,N) ,P_{m}(x;\kappa,N)\rangle_{p,N}=\frac{p_0^N}{W_{\pt,N}(n)}\, \delta_{n,m},
\end{equation}
and therefore $\{P_{n}(x;\kappa,N):|n|=k\}$ is an orthogonal basis of $\cP_k(p;N)$.

If we start with fixed positive real numbers $p_0,p_1,\dots,p_d$, we can construct a point $\kappa\in\fK_{d}$ as follows. First, we construct a matrix $U$ of the form in \eqref{2.4} such that $U^tPU$ is diagonal, i.e. 
\begin{equation}\label{2.8}
\sum_{j=0}^{d}p_ju_{j,i}u_{j,k}=0 \qquad \text{ for all }i\neq k\in\{0,1,\dots,d\}.
\end{equation}
In other words, the columns $w_j=(1,u_{1,j},\dots,u_{d,j})^t$ of $U$ must be mutually orthogonal vectors in $\Rset^{d+1}$ with respect to the inner product $(v,w)=v^tPw$ and $w_0=(1,1,\dots,1)^t$. 
The diagonal entries of $\Pt$ are uniquely determined from $p$ and $U$ by
\begin{equation}\label{2.9}
\pt_k=\frac{p_0}{\sum_{j=0}^{d}p_ju_{j,k}^2}\qquad  \text{ for } k=0,\dots,d.
\end{equation}
In dimension one, the matrix $U$ and the point $\kappa\in\fK_{1}$ are uniquely determined from $p_1$, and \eqref{2.6} gives a formula for the polynomials   introduced by Krawtchouk \cite{Kra}. However, when $d>1$, there are $d(d-1)/2$ degrees of freedom in choosing $U$ from the parameters $p_0,\dots,p_d$. 

The polynomials $P_{n}(x;\kappa,N)$ provide solutions to a multivariate discrete-discrete bispectral problem \cite{DG}. More precisely, using the operators \eqref{2.3} and the analogous operators 
\begin{equation}\label{2.10}
\Lt_{i,j} = \pt_i n_j (E_{n_i}E_{n_j}^{-1}-\Id)+ \pt_j n_i (E_{n_j}E_{n_i}^{-1}-\Id),
\end{equation}
acting on the indices $n=(n_{1},\dots,n_{d})$, with the convention that  $E_{n_{0}}=\Id$ and $n_{0}=N-|n|$, we have
\begin{subequations}\label{2.11}
\begin{align}
&n_{i}P_{n}(x;\kappa,N)= \frac{\pt_{i}}{p_{0}} \left[ \sum_{0\leq k<l\leq d} u_{k,i}u_{l,i}L_{k,l} \right] P_{n}(x;\kappa,N), \label{2.11a}\\
&x_{i}P_{n}(x;\kappa,N)= \frac{p_{i}}{p_{0}} \left[ \sum_{0\leq k<l\leq d} u_{i,k}u_{i,l}\Lt_{k,l} \right] P_{n}(x;\kappa,N), \label{2.11b}
\end{align}
\end{subequations}
see \cite[Theorem 6.1]{I1}.
The bispectral equations above can be naturally related by a bispectral involution. Indeed, note that the mapping 
\begin{subequations}\label{2.12}
\begin{equation}\label{2.12a}
\kappa=(\nu,P,\Pt,U)\to\tilde{\kappa}=(\nu,\Pt,P,U^{t})
\end{equation}
defines an involution on $\fK_{d}$. Moreover, from \eqref{2.6} it follows that 
\begin{equation}\label{2.12b}
P_{n}(x;\kappa,N)=P_{x}(n;\tilde{\kappa},N),
\end{equation}
\end{subequations}
i.e. the involution exchanges the roles of the variables $x$ and the indices $n$, thus relating equations \eqref{2.11} above.\\

\begin{Remark}\label{re2.2}
We focus in the paper on the multinomial distribution and therefore we will assume that the parameters $p_0,\dots,p_d$ are positive real numbers satisfying \eqref{2.1}, $N$ is a positive integer and we will work with polynomials with real coefficients. However, most of the constructions can be easily extended for generic real or complex numbers $p_j$ and  $N$ satisfying \eqref{2.1}. In that case, we define the polynomials by formula \eqref{2.6}, where the sum on the right-hand side is over all matrices with entries in $\Nset_0$ (note that the series is terminating because $n_j\in\Nset_0$). Thus, we obtain polynomials defined for all $n\in\Nset_0^d$ and equations~\eqref{2.11} hold. In particular, if we pick positive real numbers $s,c_1,\dots,c_d$ such that $|c|=c_1+\cdots+c_d<1$ and if we set formally $N= -s$ and $p_j = - c_j /(1-|c|)$ for $j=1,\dots,d$, we obtain polynomials defined for all $n\in\Nset_0^d$ which are mutually orthogonal with respect to the negative multinomial distribution
$$W(x;c,s)=(1-|c|)^{s}(s)_{|x|}\prod_{j=1}^{d}\frac{c_j^{x_j}}{x_j!}.$$ 
These polynomials were introduced by Griffiths \cite{Gr2} in terms of a generating function, and a direct combinatorial proof of their hypergeometric representation and bispectral properties were obtained in \cite{I2}.
\end{Remark}

\subsection*{Notations} 
Throughout the paper, it will be convenient to work with a linear combination of the Gaudin elements in \eqref{1.3}. We set
\begin{equation}\label{2.13}
G(\al,p,N;\zeta)=\sum_{i=0}^{d}\zeta_{i} G_i(\al)=\sum_{0\leq i<j\leq d}\frac{\zeta_{i}-\zeta_{j}}{\al_{i}-\al_{j}}L_{i,j},
\end{equation}
where $L_{i,j}$ are the operators acting on the variables $x_{i}$ given in \eqref{2.3}, and we denote by $\mathfrak{G}_{d+1}(\al,p,N)$ the abelian algebra generated by the operators in \eqref{2.13}. 

Similarly, we denote by
\begin{equation}\label{2.14}
\Gt(\al,\pt,N;\zeta) =\sum_{0\leq i<j\leq d}\frac{\zeta_{i}-\zeta_{j}}{\al_{i}-\al_{j}}\Lt_{i,j},
\end{equation}
the operators acting on the degree indices $n_{i}$, where $\Lt_{i,j}$ are defined in \eqref{2.10}, and by $\tilde{\mathfrak{G}}_{d+1}(\al,\pt,N)$ the  abelian algebra generated by the operators in \eqref{2.14}.

\section{Algebraic properties of $L_{i,j}$ and conditions for diagonalization}\label{se3}

\begin{Lemma}\label{le3.1}
The operators $\{ L_{i,j} \}_{{0\leq i< j\leq d}}$ are linearly independent as operators acting on the space $\Rset_{1}[x]=\Span\{1,x_{1},\dots,x_{d}\}$. 
\end{Lemma}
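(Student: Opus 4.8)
The plan is to write down explicitly the matrices of the operators $L_{i,j}$ acting on the $(d+1)$-dimensional space $\Rset_1[x]$ with basis $\{1,x_1,\dots,x_d\}$, and then show directly that a vanishing linear combination forces all coefficients to vanish.

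First I would record the action on the basis. Every $L_{i,j}$ kills constants, since each shift operator fixes $1$. For $1\le i<j\le d$, formula \eqref{2.3} gives $L_{i,j}x_k=0$ unless $k\in\{i,j\}$, while $L_{i,j}x_i=p_ix_j-p_jx_i$ and $L_{i,j}x_j=p_jx_i-p_ix_j$. For the operators involving the index $0$, using $E_{x_0}=\Id$ and $x_0=N-(x_1+\cdots+x_d)$ one has $L_{0,j}=p_0x_j(E_{x_j}^{-1}-\Id)+p_jx_0(E_{x_j}-\Id)$, so that $L_{0,j}x_k=0$ for $k\neq j$ and
$$L_{0,j}x_j=p_jx_0-p_0x_j=p_jN-p_j(x_1+\cdots+x_d)-p_0x_j.$$
The crucial feature is that this last expression carries a nonzero constant term $p_jN$, whereas none of the other $L_{i,k}x_\ell$ contributes anything constant.

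Now suppose $\sum_{0\le i<j\le d}c_{i,j}L_{i,j}=0$ on $\Rset_1[x]$, and fix $k\in\{1,\dots,d\}$. Applying the combination to $x_k$, the only surviving terms are $c_{0,k}L_{0,k}x_k$, the terms $c_{i,k}L_{i,k}x_k$ with $1\le i<k$, and the terms $c_{k,j}L_{k,j}x_k$ with $k<j\le d$. Since $p_k\neq 0$ and $N\geq 1$, comparing constant terms yields $c_{0,k}=0$; running over all $k$ kills every coefficient $c_{0,k}$. With these gone, comparing for each $k$ the coefficient of $x_\ell$ for $\ell\neq k$ — which equals $c_{i,j}p_k$ with $\{i,j\}=\{k,\ell\}$ — gives $c_{i,j}=0$ for all $1\le i<j\le d$. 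Hence all coefficients vanish, which is the asserted linear independence.

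The argument is essentially bookkeeping; I do not expect a genuine obstacle. The only point demanding care is the handling of the index $0$ via the substitution $x_0=N-(x_1+\cdots+x_d)$ together with the convention $E_{x_0}=\Id$, as this is precisely what produces the constant term $p_jN$ and lets one peel off the coefficients $c_{0,k}$ before treating the rest.
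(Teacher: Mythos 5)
Your proposal is correct and follows essentially the same route as the paper: apply the vanishing linear combination to each $x_k$, read off the constant term (which comes only from $L_{0,k}$ via $x_0=N-|x|$ and equals $c_{0,k}p_kN$) to conclude $c_{0,k}=0$, and then compare coefficients of $x_\ell$ to kill the remaining $c_{k,\ell}$. Your explicit computation $L_{i,j}x_i=p_ix_j-p_jx_i$ is the accurate one (the paper's displayed sign appears to be a typo), and the discrepancy does not affect the argument.
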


\begin{proof}
Let
\begin{equation}\label{3.1}
M=\sum_{0\leq i < j\leq d} c_{i,j}L_{i,j} \qquad \text{ where }\qquad c_{i,j}\in\Rset
\end{equation}
be a linear combination of the operators $L_{i,j}$. To simplify the notation, we define $c_{i,j}=c_{j,i}$ when $i>j$. Suppose that $M=0$. Then 
\begin{equation*}
M(x_{i})=0
\end{equation*}
for every $i\in \{1,\dots,d\}$. Note that $L_{k,j}(x_{i})=0$ when $i\notin \{k,l\}$ and  $L_{i,j}(x_{i})=p_{i}x_{j}+p_{j}x_{i}$, where $x_{0}=N-|x|$. Since the constant term of $M(x_{i})$ is $c_{0,i}Np_{i}$ we see that $c_{0,i}=0$. For $j\in \{1,\dots,d\} \setminus \{i \}$ the coefficient of $x_{j}$ in $M(x_{i})$ is $(c_{i,j}-c_{0,i})p_{i}$, hence $c_{i,j}=0$. This shows that $c_{i,j}=0$ for all $i\neq j \in\{0,1,\dots,d\}$, completing the proof.
\end{proof}

\begin{Remark}
While the operators $\{ L_{i,j} \}_{{0\leq i< j\leq d}}$ are linearly independent, they satisfy nontrivial algebraic relations. Indeed, if $i,j,k,m$ are distinct indices, then
\begin{equation*}
p_kp_m L_{i,j}=\left[L_{i,k},[L_{k,m},L_{j,m}]\right]+p_jp_k L_{i,m}+p_ip_mL_{j,k}-p_ip_jL_{k,m}.
\end{equation*}
In particular, this relation shows that the $2d-1$ operators 
\begin{equation}\label{3.2}
\{\cL_{0,i}:i=1,2,\dots,d\} \cup \{ L_{1,j}:j=2,3,\dots,d\} 
\end{equation}
are sufficient to generate the image of the Kohno-Drinfeld Lie algebra under the multinomial representation.
By taking appropriate limits, one can reduce the operator 
\begin{equation}\label{3.3}
\cH=\sum_{0\leq i< j\leq d} L_{i,j}
\end{equation}
to the Hamiltonian of the quantum harmonic oscillator, see \cite[Section 3.4]{IX3}. Thus, we can consider the system with Hamiltonian $\cH$ in \eqref{3.3} as a discrete quantum integrable extension of the harmonic oscillator, whose symmetry algebra is generated by the first integrals given in \eqref{3.2}.

Abelian subalgebras of Lie algebras with generators satisfying the Kohno-Drinfeld relations \eqref{1.1} of maximal dimension in the space $\mathfrak{k}_{d+1}^1$ spanned by $\{L_{i,j}\}_{0\leq i<j\leq d}$ were studied in a detail in \cite{AFV}. In particular, the authors show that if the elements $\{L_{i,j}\}_{0\leq i<j\leq d}$ and the brackets  $\{[L_{i,j},L_{j,k}]\}_{0\leq i<j<k\leq d}$ are linearly independent, then the maximal abelian subalgebras form a nonsingular irreducible projective subvariety of the Grassmannian of $d$-planes in $\mathfrak{k}_{d+1}^1$ isomorphic to the moduli space $\overline{\cM}_{0,d+2}$ of stable curves of genus zero with $d+2$ marked points. 
However, for $d\geq 3$  it is not hard to check that the operators in \eqref{2.10} satisfy the following relation
$$ p_{0}[L_{1,2},L_{2,3}]- p_{1}[L_{0,2},L_{2,3}]+p_{2}[L_{0,1},L_{1,3}] -p_{3}[L_{0,1},L_{1,2}]=0,$$
and therefore the brackets are not linearly independent.
\end{Remark}
The next proposition gives necessary and sufficient conditions for the multivariate Krawtchouk polynomials \eqref{2.6} to be common eigenfunctions of the Gaudin operators.
\begin{Proposition}\label{pr3.2}
For fixed distinct numbers $\al_{0},\dots,\al_{d}$ and $\kappa=(\nu,P,\Pt,U)\in\fK_{d}$ the following conditions are equivalent.
\begin{enumerate}[\rm(a)]
\item  \label{pr3.1a}  There exists $\la_{n}(\al;\zeta)\in\Rset$ such that
\begin{equation}\label{3.4}
G(\al,p,N;\zeta)P_{n}(x;\kappa,N)=\la_{n}(\al;\zeta)P_{n}(x;\kappa,N) \text{ for all }\zeta\in\Rset^{d+1} \text{ and }n\in V_N^d.
\end{equation}
\item \label{pr3.1b} The following identities hold
\begin{subequations}\label{3.5}
\begin{align}
&\frac{\al_{k}-\al_{0}}{\al_{k}-\al_{l}}=\frac{p_{k}}{p_{0}}\sum_{j=0}^{d}\pt_{j}u_{k,j}^{2}u_{l,j}, &\text{ for all }\quad k\neq l \in\{1,\dots,d\},&\label{3.5a}\\
&\sum_{j=0}^{d}\pt_{j}u_{i,j}u_{k,j}u_{l,j}=0, &\text{ for all distinct }\quad i,k,l \in \{1,\dots,d\}.&\label{3.5b}
\end{align}
\end{subequations}
\end{enumerate}
Moreover, if the equivalent conditions \rm{(\ref{pr3.1a})-(\ref{pr3.1b})} above hold then 
\begin{equation}\label{3.6}
\la_{n}(\al;\zeta) =-\sum_{i=1}^{d}\sum_{j=1}^{d} \frac{\zeta_{i}-\zeta_{0}}{\al_{i}-\al_{0}}p_{i}n_{j}(1-u_{i,j}).
\end{equation}
\end{Proposition}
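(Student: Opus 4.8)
The plan is to reduce everything to finite-dimensional linear algebra on the degree-one component $\Pset_1(p;N)=\Rset_1[x]\ominus\Rset_0[x]$, using the bispectral relation \eqref{2.11a} to identify exactly which elements of $\Span\{L_{i,j}\}$ are diagonalized by the basis $\{P_n(x;\kappa,N)\}_{n\in V_N^d}$. First I would read off the degree-one polynomials from \eqref{2.6}: since $(-n_l)_m=0$ for $m\ge 2$ when $n_l=1$, the only matrices $A\in\cM_{d,N}$ contributing to $P_{e_i}(x;\kappa,N)$ are the zero matrix and those with exactly one entry $a_{k,i}=1$, which gives
\[
P_{e_i}(x;\kappa,N)=1-\frac1N\sum_{k=1}^{d}(1-u_{k,i})x_k ,
\]
so $\{P_{e_1},\dots,P_{e_d}\}$ is a basis of $\Pset_1(p;N)$. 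Set $Q_j=\sum_{0\le r<s\le d}u_{r,j}u_{s,j}L_{r,s}$; by \eqref{2.11a}, $Q_jP_n=\frac{p_0}{\pt_j}\,n_jP_n$ for $j=1,\dots,d$ and all $n$, so on $\Pset_1(p;N)$ each $Q_j$ is $\frac{p_0}{\pt_j}$ times the projection onto $\Rset P_{e_j}$, and $Q_1|_{\Pset_1(p;N)},\dots,Q_d|_{\Pset_1(p;N)}$ form a basis of the algebra of operators on $\Pset_1(p;N)$ that are diagonal in the basis $\{P_{e_i}\}$.

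Because every $L_{r,s}$ annihilates constants, \leref{le3.1} implies that $L_{r,s}$ remain linearly independent as operators on $\Pset_1(p;N)$; hence $Q_1,\dots,Q_d$ are linearly independent, and an element $M\in\Span\{L_{r,s}\}$ is diagonalized by $\{P_n\}_{n\in V_N^d}$ if and only if $M|_{\Pset_1(p;N)}$ is diagonal in $\{P_{e_i}\}$, if and only if $M\in\Span\{Q_1,\dots,Q_d\}$. Applied to $M=G(\al,p,N;\zeta)$, this shows that condition (a) holds precisely when, for every $\zeta$, there are reals $c_1(\zeta),\dots,c_d(\zeta)$ with $G(\al,p,N;\zeta)=\sum_{j=1}^{d}c_j(\zeta)Q_j$, and then $\la_n(\al;\zeta)=\sum_{j=1}^{d}c_j(\zeta)\frac{p_0}{\pt_j}n_j$.

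It remains to analyze this linear system. Comparing coefficients of $L_{r,s}$ (legitimate by \leref{le3.1}) and using \eqref{2.13}, the identity $G(\al,p,N;\zeta)=\sum_j c_j(\zeta)Q_j$ reads $\frac{\zeta_r-\zeta_s}{\al_r-\al_s}=\sum_{j=1}^{d}c_j(\zeta)u_{r,j}u_{s,j}$ for all $0\le r<s\le d$. When it holds, $c(\zeta)$ is unique, linear in $\zeta$, and invariant under $\zeta\mapsto\zeta+\mathrm{const}$, so $c_j(\zeta)=\sum_{i=1}^{d}c_{j,i}(\zeta_i-\zeta_0)$. The equations with $r=0$ read $\sum_{j=1}^{d}c_{j,i}u_{s,j}=\delta_{s,i}/(\al_s-\al_0)$; the matrix $(u_{s,j})_{1\le s,j\le d}$ is invertible — if $\sum_{j=1}^{d}\mu_j w_j$, with $w_j$ the $j$-th column of $U$, has vanishing entries in positions $1,\dots,d$, then it equals $\tau(1,0,\dots,0)^t$ for some scalar $\tau$, and pairing with $w_k$ in the bilinear form $(v,w)=v^tPw$ and using \eqref{2.8}, \eqref{2.9} gives $\mu_k=\tau\pt_k$, whence $\tau=\sum_{k=1}^d\mu_k=\tau(1-\pt_0)$ forces $\tau=0$ — so the $c_{j,i}$ are determined, and using $\sum_{j=0}^{d}\pt_ju_{r,j}u_{s,j}=\frac{p_0}{p_r}\delta_{r,s}$ (a consequence of \eqref{2.5}) together with $u_{0,j}=1$, one checks $c_{j,i}=-\frac{\pt_jp_i}{p_0(\al_i-\al_0)}(1-u_{i,j})$. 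Substituting these values into the remaining equations (those with $1\le r<s\le d$) and again using $\sum_{j=0}^{d}\pt_ju_{r,j}u_{s,j}=0$, the cases $i=r$ and $i=s$ become exactly \eqref{3.5a} (with $(k,l)=(r,s)$ and $(s,r)$), and the case $i\notin\{r,s\}$ becomes exactly \eqref{3.5b}; conversely, under \eqref{3.5a}--\eqref{3.5b} these $c_{j,i}$ satisfy all the coefficient equations, the $r=0$ ones holding identically. Finally, feeding $c_j(\zeta)=\sum_i c_{j,i}(\zeta_i-\zeta_0)$ into $\la_n(\al;\zeta)=\sum_j c_j(\zeta)\frac{p_0}{\pt_j}n_j$ yields \eqref{3.6}.

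The main obstacle is bookkeeping rather than conceptual: one must organize the linear system in the $c_{j,i}$ and repeatedly reduce cubic sums $\sum_j\pt_ju_{r,j}u_{s,j}u_{i,j}$ using the various forms of \eqref{2.5} (notably $U^tPU=\frac1\nu\Pt^{-1}$ and $U\Pt U^t=\frac1\nu P^{-1}$), where sign and index slips are easy to make; the invertibility of the truncated matrix $(u_{s,j})_{1\le s,j\le d}$ also needs the short argument above and is what makes the eigenvalue \eqref{3.6} unique. Everything else is routine once the reduction to $\Pset_1(p;N)$ via \eqref{2.11a} and \leref{le3.1} is in place.
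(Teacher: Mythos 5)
Your proof is correct, and its skeleton coincides with the paper's: both arguments rest on the same three ingredients, namely the bispectral relation \eqref{2.11a} (your operators $Q_j$ are exactly the bracketed sums there), the linear independence of the $L_{i,j}$ from \leref{le3.1} used to compare coefficients, and the orthogonality relation $\sum_{j}\pt_ju_{r,j}u_{s,j}=\delta_{r,s}p_0/p_r$ coming from \eqref{2.5} to convert the resulting quadratic/cubic sums into \eqref{3.5a}--\eqref{3.5b}. The one place where you genuinely diverge is the determination of the eigenvalue \eqref{3.6}: the paper gets it for free from the normalization $P_n(0;\kappa,N)=1$ by evaluating $G(\al,p,N;\zeta)P_n$ at $x=0$ and observing that only the operators $L_{0,i}$ and the matrices $A_{i,j}$ with a single unit entry contribute to the hypergeometric sum; you instead extract it from the $r=0$ block of the coefficient equations, which requires your separate (and correct) argument that the truncated matrix $(u_{s,j})_{1\le s,j\le d}$ is invertible. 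Your route buys a cleaner logical structure --- the eigenvalue is forced by linear algebra rather than computed, and uniqueness of $c(\zeta)$ is explicit --- at the cost of that extra invertibility lemma; the paper's evaluation at $x=0$ is shorter but leans on the explicit series \eqref{2.6}. Your preliminary reduction to $\Pset_1(p;N)$ and the identification of $\Span\{Q_1,\dots,Q_d\}$ with the diagonalizable elements of $\Span\{L_{r,s}\}$ is a mild repackaging of the paper's coefficient comparison, and it is sound: the restriction map is injective on $\Span\{L_{r,s}\}$ precisely because the $L_{r,s}$ kill constants and \leref{le3.1} holds on $\Rset_1[x]=\Rset_0[x]\oplus\Pset_1(p;N)$.
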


\begin{proof}
We show first that if (\ref{pr3.1a}) holds, then $\la_{n}(\al;\zeta)$ is given by \eqref{3.6}. Since $P_{n}(0;\kappa,N)=1$ we can compute $\la_{n}(\al;\zeta)$ by evaluating the left-hand side of \eqref{3.4} at $x=0$. From \eqref{2.3} it is clear that if $0<i<j\leq d$ then $L_{i,j}(q(x))|_{x=0}=0$ for any polynomial $q(x)$. Thus, 
\begin{equation}\label{3.7}
G(\al,p,N;\zeta)P_{n}(x;\kappa,N)\Big|_{x=0}=\sum_{i=1}^{d}\frac{\zeta_{i}-\zeta_{0}}{\al_{i}-\al_{0}}L_{0,i}(P_{n}(x;\kappa,N)) \Big|_{x=0},
\end{equation}
and we need to evaluate $L_{0,i}(P_{n}(x;\kappa,N))$ at $x=0$. Since 
$$L_{0,i} =p_0 x_i (E_{x_i}^{-1}-\Id)+ p_i (N-|x|) (E_{x_i}-\Id)$$ 
it follows that 
\begin{equation}\label{3.8}
L_{0,i}(P_{n}(x;\kappa,N)) \Big|_{x=0}=p_{i}N [(E_{x_i}-\Id) P_{n}(x;\kappa,N)]  \Big|_{x=0}.
\end{equation}
Note that 
$$(E_{x_i}-\Id) (-x_{i})_{\ell_{i}}=-\ell_{i}(-x_{i})_{\ell_{i}-1},\quad \text{ and therefore }\quad 
(E_{x_i}-\Id) (-x_{i})_{\ell_{i}}  \Big|_{x_i=0}=-\delta_{\ell_{i},1}.$$
This shows that 
$$(E_{x_i}-\Id)[ (-x_{1})_{\ell_{1}}\cdots  (-x_{d})_{\ell_{d}}]  \Big|_{x=0}=-\delta_{\ell , e_i}
=\begin{cases}-1 &\text{ if }\ell_i =1 \text{ and }\ell_j=0 \text{ for }j\neq i,\\
0 &\text{ otherwise.}
\end{cases}
$$
Using the above and the explicit formula \eqref{2.6} for $P_{n}(x;\kappa,N)$ we see that we can have nonzero contributions in 
$[(E_{x_i}-\Id) P_{n}(x;\kappa,N)]  \big|_{x=0}$ only when $A=A_{i,j}$ for some $j\in \{1,\dots, d\}$, where $A_{i,j}$ denotes the $d\times d$ matrix that has $(i,j)$th entry $1$ and all other entries zeros. Moreover, if $A=A_{i,j}$ we have
\begin{align*}
&(E_{x_i}-\Id)\left[ \frac{\prod_{k=1}^{d}(-n_k)_{\sum_{l=1}^{d}a_{l,k}}\, \prod_{l=1}^{d}(-x_l)_{\sum_{k=1}^{d}a_{l,k}}}{(-N)_{\sum_{l,k=1}^{d}a_{l,k}}}\right] \prod_{l,k=1}^{d}\frac{(1-u_{l,k})^{a_{l,k}}}{a_{l,k}!} \\
 &\qquad\qquad=(E_{x_i}-\Id)\left[-\frac{n_jx_i}{N}\right] (1-u_{i,j})=-\frac{n_j(1-u_{i,j})}{N},
\end{align*}
which combined with \eqref{3.7}-\eqref{3.8} yields \eqref{3.6}.

We show next that (a) and (b) are equivalent. Substituting the explicit formula given in \eqref{3.6} for $\la_{n}(\al;\zeta)$ into \eqref{3.4} and replacing $n_jP_{n}(x;\kappa,N)$ by the operator on the right-hand side of \eqref{2.11a} we can rewrite the right-hand side of \eqref{3.4} as follows
\begin{equation*}
\begin{split}
&\la_{n}(\al;\zeta)P_{n}(x;\kappa,N)\\
&\qquad=-\left[ \sum_{i=1}^{d}\sum_{j=1}^{d} \frac{\zeta_{i}-\zeta_{0}}{\al_{i}-\al_{0}}p_{i}(1-u_{i,j}) 
 \frac{\pt_{j}}{p_{0}}\sum_{0\leq k<l\leq d} u_{k,j}u_{l,j}L_{k,l}\right] P_{n}(x;\kappa,N).
 \end{split}
\end{equation*}
Using the last equation, \eqref{2.13} and \leref{le3.1} we see that \eqref{3.4} holds if and only if 
\begin{equation}\label{3.9}
\begin{split}
&\frac{\zeta_{k}-\zeta_{l}}{\al_{k}-\al_{l}}=- \sum_{i=1}^{d}\sum_{j=1}^{d} \frac{\zeta_{i}-\zeta_{0}}{\al_{i}-\al_{0}}p_{i}(1-u_{i,j}) 
 \frac{\pt_{j}}{p_{0}} u_{k,j}u_{l,j},\\
&\qquad\qquad \text{ for all $\zeta=(\zeta_0,\dots,\zeta_d)\in\Rset^{d+1}$ and $k\neq l\in \{0,\dots,d\}$. }
\end{split}
\end{equation}
From \eqref{2.5} it follows that $U\Pt U^{t} =p_0P^{-1}=\diag(1,p_{0}/p_{1},\dots,p_{0}/p_{d})$ and therefore
\begin{equation}\label{3.10}
\sum_{j=0}^{d}\pt_j u_{i,j}u_{k,j}=\delta_{i,k} \,\frac{p_0}{p_k}.
\end{equation}
Using the last formula we can rewrite the sum over $j$ in \eqref{3.9} as follows
$$\sum_{j=1}^{d}(1-u_{i,j})\pt_{j}u_{k,j}u_{l,j}=-\sum_{j=0}^{d}\pt_{j} u_{i,j} u_{k,j}u_{l,j}.$$
Substituting this into \eqref{3.9}, we conclude that  \eqref{3.4} holds if and only if 
\begin{equation}\label{3.11}
\begin{split}
&\frac{\zeta_{k}-\zeta_{l}}{\al_{k}-\al_{l}}=\sum_{i=1}^{d} \frac{\zeta_{i}-\zeta_{0}}{\al_{i}-\al_{0}}\frac{p_{i}}{p_0}
\sum_{j=0}^{d} \pt_{j} u_{i,j} u_{k,j}u_{l,j},\\
&\qquad\qquad \text{ for all $\zeta=(\zeta_0,\dots,\zeta_d)\in\Rset^{d+1}$ and $k\neq l\in \{0,\dots,d\}$. }
\end{split}
\end{equation}
To complete the proof we need to show that the last equation is equivalent to \eqref{3.5}. From \eqref{3.10} it follows that \eqref{3.11} holds if $k=0$ or $l=0$, so we can assume that $k\neq l\in \{1,\dots,d\} $. For any $i$ different from $0$, $k$ and $l$, the coefficients of $\zeta_i$ on the right-hand side in \eqref{3.11} is equal to zero if and only if \eqref{3.5b} holds. Comparing the coefficients of $\zeta_k$ and $\zeta_l$ on both sides in \eqref{3.11} leads to \eqref{3.5a}. This completes the proof of the implication (a)$\Rightarrow$(b). The opposite direction follows from the above and the fact that the coefficient of $\zeta_0$ on the right-hand side of \eqref{3.11} is $0$ if \eqref{3.5b} holds.
\end{proof}

\begin{Remark}
Equations \eqref{2.8}, \eqref{3.5a}-\eqref{3.5b}, where $\pt_j$ are given in \eqref{2.9}, provide an explicit but rather complicated system of $d$ linear and $3\binom{d}{2}+\binom{d}{3}$ nonlinear equations for the $d^2$ unknown entries $u_{i,j}$, $1\leq i,j\leq d$ of the matrix $U$. When $d=2$, one can show by brute force that the $3$ equations \eqref{2.8} together with \eqref{3.5a} for $k=1$ and $l=2$ imply equation \eqref{3.5a} for $k=2$, $l=1$, and these $4$ equations provide two solutions for the entries $u_{i,j}$. However, proving that this system is consistent for $d= 3$ is a significantly more difficult task even with the use of a computer algebra system. We can bypass this difficulty and construct a solution of these equations by imposing an appropriate ansatz. At this point, we can make a parallel with the work of Gaudin and the Bethe ansatz \cite{Gau2}. Bethe's great insight was to look for eigenstates of the one-dimensional antiferromagnetic Heisenberg model of a specific form, which  depend on appropriate parameters satisfying certain constraints \cite{Bethe}. Gaudin \cite{Gau1} explored this idea for quantum spin chains by constructing Bethe vectors by applying  elementary operations to the vacuum ${\lvert}0{\rangle}$. The operations depend on free parameters, and the Bethe vectors become eigenvectors when the free parameters satisfy the so called Bethe ansatz equations. 
More precisely, using the notations in \cite[page 280, formula (13.20)]{Gau2}, Gaudin considers vectors of the form $S^{-}(\ga_1)S^{-}(\ga_2)\cdots S^{-}(\ga_k){\lvert}0{\rangle}$, where $\ga_1,\dots \ga_k$ are free parameters. The Bethe ansatz equations for the Hamiltonian in $d+1$ spin variables can be written as follows 
\begin{equation*}
\sum_{j=1}^{d+1}\frac{s_j}{\al_j-\ga_i}-\sum_{\begin{subarray}{c}j=1\\ j\neq i \end{subarray}}^{k}\frac{1}{\ga_j-\ga_i}=0, \qquad\text{ for }i=1,\dots, k,
\end{equation*}
see \cite[formula (5.9)]{Gau1} or \cite[page 281, formula (13.27)]{Gau2}.
In particular, if $k=1$, the second sum above is missing and the values of  $\ga_1$ for which $S^{-}(\ga_1){\lvert}0{\rangle}$ is an eigenvector are precisely the roots of the polynomial of degree $d$ given by
\begin{equation}\label{3.12}
\sum_{j=1}^{d+1}\ s_j \prod_{\begin{subarray}{c}\ell=1\\ \ell \neq j  \end{subarray}}^{d+1} (\al_{\ell}-\ga_1)=0.
\end{equation}

We apply a similar idea here by looking for special solutions of equations \eqref{2.8} and \eqref{3.5} of the form 
$$u_{i,j}=\frac{1}{1+\al_i\be_j},$$
where $\be_1,\dots,\be_d$ are free parameters. Equivalently, plugging the above formula for $u_{i,j}$ into \eqref{2.6}, we can define ``Bethe polynomials"  for the multinomial distribution.
With the ansatz above, the huge and complicated system of nonlinear equations for $u_{i,j}$ reduces to a relatively simple set of $d$ algebraic relations  for the unknown parameters $\be_j$ which allow us to identify them with the roots of an explicit polynomial of degree $d$ similar to \eqref{3.12}. What is particularly interesting here is that: 
\begin{enumerate}
\item the roots of this polynomial lead to the construction of multivariate Krawtchouk polynomials which provide a complete eigenbasis of the space $\Rset_N[x]$ of polynomials of total degree at most $N$ for the Gaudin operators, and
\item if we replace the parameters $\al_j$ by the roots $\be_j$ in \eqref{2.14}, we obtain another set of Gaudin operators which are also diagonalized by the same multivariate Krawtchouk polynomials, considered as functions of their degree indices.
\end{enumerate}
This is the content of the main result stated in \thref{th4.1} in the next section.
\end{Remark}

\section{Diagonalization of Gaudin operators}\label{se4}

Since the Gaudin operators $G(\al,p,N;\zeta)$ depend only on the differences $\al_i-\al_0$, we will simplify the formulas by setting 
$\al_0=0$, and thus we will assume below that $\al_{1},\dots,\al_{d}$ are distinct nonzero numbers. 

The main result of the paper is the following theorem.

\begin{Theorem}\label{th4.1}
Let $\al=(\al_0,\dots,\al_d)\in\Rset^{d+1}$, where $\al_j$ are distinct numbers and $\al_0=0$. The polynomial 
\begin{equation}\label{4.1}
R(z)=R(z;p,\al)=p_{0}\prod_{k=1}^{d}(1+\al_{k}z)+\sum_{j=1}^{d} p_{j} \prod_{\begin{subarray}{c}k=1\\ k\neq j \end{subarray}}^{d}(1+\al_{k} z)
\end{equation}
has $d$ distinct nonzero real roots $\be_{1},\dots,\be_{d}$ such that $\al_{i}\be_{j}\neq -1$ for all $i,j\in\{1,\dots,d\}$. We set $\be_{0}=0$, and we define a $(d+1)\times(d+1)$ matrix $U$ with entries 
\begin{equation}\label{4.2}
u_{i,j}=\frac{1}{1+\al_i\be_j}, \qquad 0\leq i,j\leq d,
\end{equation}
and a diagonal matrix $\Pt$ with entries given in \eqref{2.9}. The polynomials $\{P_{n}(x;\kappa,N)\}$ in \eqref{2.6} corresponding to the point $\kappa=(1/p_{0},P,\Pt,U)\in\fK_{d}$ diagonalize the abelian algebras
\begin{itemize}
\item $\mathfrak{G}_{d+1}(\al,p,N)$ (acting on the variables $x_{i}$), and 
\item $\tilde{\mathfrak{G}}_{d+1}(\be;\pt,N)$ (acting on the degree indices $n_{j}$). 
\end{itemize}
Moreover, the following spectral equations hold 
\begin{subequations}\label{4.3}
\begin{align}
G(\al,p,N;\zeta)P_{n}(x;\kappa,N)&=\mu_{n}(\al,\be,p;\zeta)P_{n}(x;\kappa,N), \label{4.3a} \\ 
\Gt(\be,\pt,N;\zeta)P_{n}(x;\kappa,N)&=\mu_{x}(\be,\al,\pt;\zeta)P_{n}(x;\kappa,N),  \label{4.3b}
\end{align}
for all $\zeta\in\Rset^{d+1}$, $x,n\in V_N^d$, where 
\begin{equation}\label{4.3c}
\mu_{n}(\al,\be,p;\zeta) =-\sum_{i=1}^{d}\sum_{j=1}^{d} \frac{(\zeta_{i}-\zeta_{0})p_{i}n_{j}\be_{j }}{1+\al_{i}\be_{j}}.
\end{equation}
\end{subequations}
\end{Theorem}

\begin{proof}
Since $R(0;p,\al)=1$ and $R(-1/\al_{i};p,\al)=p_{i}  \prod_{\begin{subarray}{c}k=1\\ k\neq i \end{subarray}}^{d}(1-\al_{k}/\al_{i}) \neq 0$,  the roots  $\be_{1},\dots,\be_{d}$ of $R(z;p,\al)$ are nonzero and satisfy $\al_{i}\be_{j}\neq -1$. We prove that they are real and distinct by showing that the reverse polynomial 
$$r(z)=z^{d}R(1/z;p,\al)=p_{0}\prod_{k=1}^{d}(z+\al_{k})+ z \sum_{j=1}^{d} p_{j} \prod_{\begin{subarray}{c}k=1\\ k\neq j \end{subarray}}^{d}(z+\al_{k})$$
has $d$ distinct real roots. Since $r(-\al_{k})\neq 0$, the roots of $r(z)$ coincide with the zeros of the rational function $f(z)=r(z)/q(z)$, where $q(z)=\prod_{k=0}^{d}(z+\al_{k})=z\prod_{k=1}^{d}(z+\al_{k})$. Suppose that $(i_{0},i_{1},\dots,i_{d})$ is a permutation of $(0,1,\dots, d)$ which arranges the numbers $-\al_{j}$ in increasing order, i.e.
$-\al_{i_{0}}<-\al_{i_{1}}<\cdots<-\al_{i_{d}}$. Since
$$f(z)=\frac{r(z)}{q(z)}=\frac{p_{0}}{z}+\sum_{k=1}^{d}\frac{p_{k}}{z+\al_{k}}$$
and $f$ has opposite one-sided limits $\lim_{z\to -\al_{i_{k}}^{+}}f(z)=\infty$, $\lim_{z\to -\al_{i_{k+1}}^{-}}f(z)=-\infty$ at the end points of each interval $I_{k}=(-\al_{i_{k}},-\al_{i_{k+1}})$, it follows that $f$ has at least one real zero on each interval $I_{k}$ for $k=0,1,\dots,d-1$, proving that $f$ and $r$ have $d$ distinct real roots.

Clearly, if $\al_0=\be_0=0$, equation \eqref{4.2} implies that $u_{i,0}=u_{0,j}=1$, i.e. $U$ has the form in \eqref{2.4}. We show next that with the ansatz \eqref{4.2}, the conditions in \prref{pr3.2}(b) are satisfied if equation \eqref{2.8} holds. If we set 
\begin{equation}\label{4.4}
w_{k,l}=\frac{\al_k}{\al_k-\al_l},\qquad \text{ for }k\neq l\in \{0,\dots,d\},
\end{equation}
then for every $k\neq l\in \{0,\dots,d\}$ and $j\in\{0,\dots,d\}$ we have 
\begin{equation}\label{4.5}
u_{k,j}u_{l,j}=w_{k,l} u_{k,j}+w_{l,k} u_{l,j}.
\end{equation}
Since $w_{k,l}$ is independent of $j$, it follows that for every $i\neq k,l$ we have 
\begin{align*}
\sum_{j=0}^{d}\pt_{j}u_{i,j}u_{k,j}u_{l,j}=w_{k,l} \sum_{j=0}^{d}\pt_{j}u_{i,j}u_{k,j}+w_{l,k} \sum_{j=0}^{d}\pt_{j}u_{i,j}u_{l,j},
\end{align*}
and therefore the right-hand side is $0$ by \eqref{3.10}. This means that with the ansatz in \eqref{4.2}, equation \eqref{3.5b} is automatically satisfied. Multiplying \eqref{4.5} by $\pt_{j}u_{k,j}$, summing over $j$, and using \eqref{3.10} again we see that
\begin{align*}
\frac{p_{k}}{p_{0}}\sum_{j=0}^{d}\pt_{j}u_{k,j}^{2}u_{l,j}= \frac{p_{k}w_{k,l}}{p_{0}}\sum_{j=0}^{d}\pt_{j}u_{k,j}^{2} + \frac{p_{k}w_{l,k}}{p_{0}}\sum_{j=0}^{d}\pt_{j}u_{k,j}u_{l,j}=w_{k,l}=\frac{\al_k}{\al_k-\al_l},
\end{align*}
i.e. equation \eqref{3.5a} is also automatically satisfied. To complete the proof of \eqref{4.3a} we need to show that if the parameters $\be_1,\dots,\be_{d}$ are the roots of the polynomial $R(z;p,\al)$ then equations \eqref{2.8} hold. But note that $u_{i,j}$ is invariant if we exchange the roles of $i$ and $j$ and the vectors $\al$ and $\be$. Therefore, if we set 
\begin{equation*}
\wt_{k,l}=\frac{\be_k}{\be_k-\be_l},\qquad \text{ for }k\neq l\in \{0,\dots,d\},
\end{equation*}
equation \eqref{4.5} gets replaced by 
\begin{equation*}
u_{j,k}u_{j,l}=\wt_{k,l} u_{j,k}+\wt_{l,k} u_{j,l}.
\end{equation*}
The last equation and arguments similar to the ones above show that equations \eqref{2.8} hold if and only if they hold for $i=0$ and $k\in\{1,\dots,d\}$, leading to the following system of $d$ equations for the parameters $\be_1,\dots,\be_d$
\begin{equation*}
p_{0}+\sum_{j=1}^{d}\frac{p_j}{1+\al_j\be_k}=0 \qquad \text{ for all } k\in\{1,\dots,d\}.
\end{equation*}
It is easy to see that these equations coincide with the equation for the roots of the polynomial $R(z;p,\al)$, completing the proof of \eqref{4.3a}.

Equation~\eqref{4.3b} follows from \eqref{4.3a} and the duality in \eqref{2.12}. Indeed, since transposing the matrix $U$ amounts to exchanging the vectors $\al$ and $\be$ and $U\Pt U^t$ is diagonal, the arguments above show that the polynomials corresponding to the point $\tilde{\kappa}\in \fK_{d}$ diagonalize the operators in the abelian algebra $\mathfrak{G}_{d+1}(\be,\pt,N)$ which combined with equations \eqref{2.12} gives \eqref{4.3b}.
\end{proof}

\begin{Remark}
The proof of \thref{th4.1} shows that the subset of $\fK_{d}$ corresponding to polynomials which diagonalize the Gaudin operators is invariant under the bispectral involution \eqref{2.12a}. It is perhaps interesting to note also that the polynomial in \eqref{4.1} constructed with the dual parameters $\pt_j,\be_j$ must vanish at $\al_1,\dots,\al_d$, hence $R(z;\pt,\be)=\pt_{0}\prod_{k=1}^{d}(1+\be_{k}z)+\sum_{j=1}^{d} \pt_{j} \prod_{\begin{subarray}{c}k=1\\ k\neq j \end{subarray}}^{d}(1+\be_{k} z)=\prod_{k=1}^{d}(1-z/\al_{k})$.
\end{Remark}

\section*{Acknowledgments}
I would like to thank Emil Horozov and Milen Yakimov for useful discussions, and a referee for suggestions to improve an earlier version of the paper.

\end{document}